\newcommand{\<}{\sqsubseteq}
\newcommand{\lra}{\longrightarrow}
\newcommand{\s}{\lesssim}
\newcommand{\Conformance}[2]{#1\lesssim_{CS}#2}
\newcommand{\NConformance}[2]{#1\not\lesssim_{CS}#2}
\newcommand{\PreConformance}[2]{#1\lesssim^p_{CS}#2}
\newcommand{\CoCo}[2]{#1\lesssim_{CC}#2}
\newcommand{\SimiCC}{\lesssim_{CC}}
\newcommand{\da}{\downarrow}
\newenvironment{proof}{\par\addvspace{\bigskipamount}
\noindent\textit{\textbf{Proof.}}\ }{\par\addvspace{\bigskipamount}}
\newcommand{\qed}{~\Square}
\title{Equational Characterization of Covariant-Contravariant Simulation and Conformance Simulation Semantics}
\author{Ignacio F\'abregas \quad\qquad David de Frutos Escrig\quad\qquad Miguel Palomino
\institute{Universidad Complutense de Madrid\\ Madrid, Spain} \institute{Departamento de Sistemas Inform\'aticos y Computaci\'on
\thanks{Research supported by the Spanish projects DESAFIOS10 TIN2009-14599-C03-01, TESIS TIN2009-14321-C02-01 and PROMETIDOS
S2009/TIC-1465. The second author worked in this paper during a visit to Reykjavik University sponsored through a grant by the ABEL Extraordinary Chair.}\\} \email{fabregas@fdi.ucm.es \quad\qquad defrutos@sip.ucm.es \quad\qquad miguelpt@sip.ucm.es} }
\begin{document}
\newtheorem{definition}{Definition}
\newtheorem{theorem}{Theorem}
\newtheorem{proposition}{Proposition}
\newtheorem{lemma}{Lemma}
\newtheorem{example}{Example}
\newtheorem{remark}{Remark}

\maketitle

\begin{abstract}
Covariant-contravariant simulation and conformance simulation
generalize plain simulation and try to capture the fact that it is
not always the case that ``the larger the number of behaviors, the
better''. We have previously studied their logical characterizations
and in this paper we present the axiomatizations of the preorders
defined by the new simulation relations and their induced
equivalences. The interest of our results lies in the fact
that the axiomatizations help us to  know the new
simulations better, understanding in particular the role of the
contravariant characteristics and their interplay with the
covariant ones; moreover, the axiomatizations provide us with a
powerful tool to (algebraically) prove results of the
corresponding semantics. But we also consider our results
interesting from a metatheoretical point of view: the fact that
the covariant-contravariant simulation equivalence is indeed
ground axiomatizable when there is no action that exhibits both a
covariant and a contravariant behaviour, but becomes
non-axiomatizable whenever we have together actions of that kind
and either covariant or contravariant actions, offers us a new
subtle example of the narrow border separating axiomatizable and
non-axiomatizable semantics. We expect that by studying these
examples we will be able to develop a general theory separating
axiomatizable and non-axiomatizable semantics.
\end{abstract}

\section{Introduction and some related work}

Simulations are a very natural way to compare systems defined by labeled transition systems or other related mechanisms based on describing the behavior of states by means of the actions they can execute \cite{Park81}. They aim at comparing processes based on the simple premise ``you are better if you can do as much as me, and perhaps some other new things''. This assumes that all the executable actions are controlled by the user (no difference between input and output actions) and does not take into account that whenever the system has several possibilities for the execution of an action it will choose in an unpredictable internal way, so that more possibilities means less control.

In order to cope with these limitations one should consider adequate versions of simulation where the characteristics of actions and the idea of preferring processes that are less non-deterministic are taken into account. This leads to two new notions of simulation:
covariant-contravariant simulation and conformance simulation that we roughly sketched in \cite{DeFrutosEtAl07} and presented in detail in \cite{FabregasFP09}, where we proved that they can be presented as particular instances of the general notion of categorical simulation developed by Hughes and Jacobs \cite{HughesJacobs04}.

Certainly, the distinction between input and output actions or similar classifications is not meant to be new at all and, for instance, they were present in modal transition systems as early as the end of the eighties. They also play a central role in I/O-automata \cite{Lynch88} and more recently appear as component of several works on interface automata \cite{AlfaroH01,LarsenNW06}, where one finds the covariant-contravariant distinction when the guarantees of the specification can only be assumed if the conditions of the specification are satisfied.

Concerning conformance simulation, the first related references are also quite old \cite{Leduc92,Tretmans96}, corresponding to the notion of conformance testing, which is close to failure semantics \cite{BrookesR84}. However, it is a bit surprising that in both cases we lack a basic theory where these notions are presented in a simplified scenario, stressing their main characteristics and properties. We think that the theory of semantics for processes, and particularly the simulation semantics, is a perfect field in which to develop that basic theory. This has been already proved in \cite{FabregasFP09}, where our new simulation semantics were shown to be categorical simulations, thus inheriting all their good properties for free.

In \cite{FabregasFP10} we have also briefly presented the logical characterizations of the two semantics. Now that we already know quite well the behaviour of the two new notions of simulation we can give their algebraic presentation. By the way, although in our previous works on the unified study of process semantics the (classical) covariant character of all the actions had several important consequences, mainly represented by the extremely simple and easy to apply basic axiom for simulation $(\mathsf{S})\; x\<x+y$ (or equivalently, just $0\<y$), we have been able to borrow from \cite{Frutos-EscrigGP09,AcetoEtAl07,Frutos-EscrigGP09b} several ideas about the axiomatization of process semantics that, although not directly applicable due to the special characteristics of the new semantics, can be adequately adapted.

However, not all of the simple and nice results for the algebraic theory of plain (covariant) simulation can be extended to the general covariant-contravariant case. In particular, in order to obtain the maximal genericity, when we defined covariant-contravariant simulations in \cite{FabregasFP09} we admitted not only both covariant and contravariant actions, but also other actions with a bivariant nature. This decision was taken because when presenting a general theory of categorical simulations in \cite{HughesJacobs04}, J.~Hughes and B.~Jacobs already noticed that bisimulation was a particular (in fact, trivial) example of simulation semantics. It was also clear that inverse simulation (namely, contravariant simulation) was also another example, and then we were able to prove that our general covariant-contravariant simulation was another categorical simulation that smoothly combines bisimulation, plain (covariant) simulation and inverse (contravariant) simulation.

Obviously, plain bisimulation has a simple axiomatization, as is the case for plain simulation; we will see in this paper that the preorder defined by our covariant-contravariant simulation can also be finitely axiomatized. When we considered the induced
equivalence, we found indeed a finite axiomatization for the case in which there are no bivariant actions (actions that can be considered as both input and output) in our alphabet. The axiomatization and its completeness proof were obtained by adapting the general techniques in \cite{Frutos-EscrigGP09b,Frutos-EscrigGP09} for the covariant case to our more general covariant-contravariant scenario. However, as soon as a single bivariant action is introduced, and at least one non-bivariant one is also present, then the equational theory of covariant-contravariant simulation equivalence becomes non-finitely axiomatizable, and in fact the proof of this result is extraordinarily simple.

Even if this is a negative result, we think that it will contribute to enlight the narrow border separating axiomatizable and non-axiomatizable process theories, which we expect to continue exploring in the future.

There is a large collection of recent papers where notions close to those
studied here are either developed or applied; a detailed comparison will appear
elsewhere.
However, we insist on the fact that we were not able to find a basic study
where the main results on process theory had been extended to a framework
containing any contravariant characteristics, although it is true that some
small contributions along this direction can be found in some of these papers.
We plan to develop a thorough compilation of the works on this topic by
isolating the places where our foundational study could help to understand the
different developments, as well as looking for applications and new enhancements
to our theory that could be of use to relate all the disconnected work on the
area.
In turn, we hope that this will also provide us with some intuition to
understand those results and produce new formal techniques to obtain proofs
of those, or other interesting results in the area.
So, simply to give a hint, a sample of those works would include
\cite{20years,BenesKLS09,LarsenT88,RacletetAl09}.

\section{Preliminaries}

In this section we summarize some definitions and concepts from \cite{Cirstea06,FabregasFP09} and introduce the notation we are going to use.

Let us recall our two new simulation notions:

\begin{definition}
Given $P=(P,A,\rightarrow_P)$ and $Q=(Q,A,\rightarrow_Q)$, two labeled transition systems (LTS) for the alphabet $A$, and $\{A^r,A^l, A^{\mathit{bi}}\}$ a partition of this alphabet, a \textbf{$(A^r,A^l)$-simulation} (or just a covariant-contravariant simulation) between them is a relation $S\subseteq P\times Q$ such that for every $pSq$ we have:
\begin{itemize}
\item For all $a\in A^r\cup A^{\mathit{bi}}$ and all $p\stackrel{a}{\lra}p'$ there exists $q\stackrel{a}{\lra}q'$ with $p'Sq'$.

\item For all $a\in A^l\cup A^{\mathit{bi}}$, and all $q\stackrel{a}{\lra}q'$ there exists $p\stackrel{a}{\lra}p'$ with $p'Sq'$.
\end{itemize}
We will write $\CoCo{p}{q}$ if there exists a covariant-contravariant simulation $S$ such that $pSq$.
\end{definition}
This definition combines the requirements of plain simulation, for some of the actions, with those of plain ``anti-simulation'', for some of the
remaining actions, imposing both on so-called bivariant actions.

\begin{definition}
Given $P=(P,A,\rightarrow_P)$ and $Q=(Q,A,\rightarrow_Q)$ two labeled transition systems for the alphabet $A$, a \textbf{conformance simulation}
between them is a relation $R\subseteq P\times Q$ such that whenever $pRq$, then:
\begin{itemize}
\item For all $a\in A$, if $p\stackrel{a}{\lra}$, then $q\stackrel{a}{\lra}$ (this means, using the usual notation for process
algebras, that $I(p)\subseteq I(q)$).

\item For all $a\in A$ such that $q\stackrel{a} {\lra}q'$ and $p\stackrel{a}{\lra}$, there exists some $p'$ with $p\stackrel{a}{\lra}p'$ and
$p'R q'$.
\end{itemize}
We will write $\Conformance{p}{q}$ if there exists a conformance simulation $R$ such that $pRq$.
\end{definition}
The first clause of the definition guarantees that $Q$ has at least all the behaviors of
$P$, allowing to ``improve'' a process by extending the set of actions it offers, whereas the second clause establishes that a process can be
``improved'' by reducing the nondeterminism in it.

Let us recall that the set \textit{BCCSP(A)} of basic processes for the alphabet $A$ is defined by the $BNF$-grammar
\[p::=0\mid ap\mid p+p\]
where $a\in A$.
The operational semantics for BCCSP terms is defined by
\[%
\begin{array}{ccccc}
 ap\stackrel{a}{\lra}p &\mbox{\qquad} & \frac{\displaystyle p\stackrel{a}{\lra}p'}{\displaystyle
 p+q\stackrel{a}{\lra}p'} &\mbox{\qquad} & \frac{\displaystyle q\stackrel{a}{\lra}q'}{\displaystyle
  p+q\stackrel{a}{\lra}q'}\\
\end{array}
\]
With these operators we can only define finite processes; however, it is well known that these operators capture the essence of any transition system, which can
be defined by a system of equations specifying the behavior of each state.
(The axioms for recursive processes, other interesting extensions
including the communication operators, and possibly some others, are left for
future work.)

\section{Axiomatization of the new simulation preorders}

In this section we present a finite axiomatization of the two preorders for basic finite processes induced by our new kinds of simulation.

\subsection{Covariant-contravariant semantics}
We consider a partition $\{A^r,A^l,A^\mathit{bi}\}$ of the alphabet $A$, with actions that have either a covariant nature, or contravariant, or both at the same time.
Contravariant simulation $\lesssim_S^{-1}$ is just the inverse of plain simulation and therefore can be trivially axiomatized by inverting the axiom for plain
simulation
\begin{itemize}
\item[] $(\mathsf{S})\quad {x}\<{x+y}$,
\end{itemize}

\noindent thus obtaining

\begin{itemize}
\item[] $(\mathsf{S^{-1}})\quad {x+y}\<{x}$.
\end{itemize}

In order to produce an axiomatization of
covariant-contravariant simulation we need to combine in an adequate way these two axioms, by constraining each of them to the case in which the
added process $y$ only offers actions with the corresponding covariant or contravariant character. Hence we obtain:
\begin{itemize}
\item[ ] $(\mathsf{S}^r)\quad {I(y)\subseteq A^r}\Longrightarrow {{x}\<{x+y}}$.

\item[ ] $(\mathsf{S}^{-1,l})\quad {I(y)\subseteq A^l}\Longrightarrow {{x+y}\<{x}}$.
\end{itemize}
We can omit the conditions in these two axioms by considering two generic actions $a_r\in A^r$ and $a_l\in A^l$:
\begin{itemize}
\item[ ] $(\mathsf{S}^r_p)\quad {{x}\<{x+a_r y}}$.

\item[ ] $(\mathsf{S}^{l}_p)\quad {{x+a_l y}\<{x}}$.
\end{itemize}

Note that actions in $A^\mathit{bi}$ do not appear in the axioms above, although
they could be included in the processes instantiating the variables
$x$ and $y$.
This is an immediate consequence of the fact that their behavior
corresponds to that governed by bisimulation, so that we need not add any new axiom to those capturing the bisimilarity relation:
\begin{itemize}
\item[] $(\mathsf{B}_1)\quad {{x+y}={y+x}}$.

\item[] $(\mathsf{B}_2)\quad {{(x+y)+z}={x+(y+z)}}$.

\item[] $(\mathsf{B}_3)\quad {{x+x}={x}}$.

\item[] $(\mathsf{B}_4)\quad {x+0}={x}$.
\end{itemize}
We will use these axioms implicitly in the remainder of this paper.

\begin{proposition} The $(A^r,A^l)$-simulation preorder can be axiomatically defined by means of the set of axioms  $\{\mathsf{B}_1,\mathsf{B}_2,\mathsf{B}_3,\mathsf{B}_4,\mathsf{S}^r_p, \mathsf{S}^{l}_p\}$.
\end{proposition}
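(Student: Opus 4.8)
The plan is to prove soundness and completeness separately, where ``derivable'' means provable using reflexivity, transitivity, the equations $\mathsf{B}_1$--$\mathsf{B}_4$ read as two-way rewrite rules, and the monotonicity rules that make $\<$ a precongruence for prefixing and for $+$. For \emph{soundness} I would first recall that $\lesssim_{CC}$ is a precongruence on $\mathrm{BCCSP}(A)$ and that bisimilarity is contained in $\lesssim_{CC}$ (in fact in both directions, a bisimulation being symmetric and trivially satisfying both clauses of the definition) --- known facts of the categorical-simulation framework \cite{FabregasFP09,HughesJacobs04} --- which disposes of $\mathsf{B}_1$--$\mathsf{B}_4$ and of the closure rules. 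It then suffices to check the two new axioms: for $\mathsf{S}^r_p$, that $\mathit{Id}\cup\{(x,\,x+a_ry)\}$ is an $(A^r,A^l)$-simulation --- the covariant clause holds because $x+a_ry$ has every transition of $x$, and the contravariant clause is vacuous on the new summand because $a_r\in A^r$ lies outside $A^l\cup A^{\mathit{bi}}$ --- and symmetrically for $\mathsf{S}^{l}_p$. This part is routine.

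The core is \emph{completeness}. I would put both terms in head normal form $p=\sum_{i\in I}a_ip_i$, $q=\sum_{j\in J}c_jq_j$ (using $\mathsf{B}_1,\mathsf{B}_2,\mathsf{B}_4$) and argue by induction on $\mathrm{depth}(p)+\mathrm{depth}(q)$, assuming $\CoCo{p}{q}$ via a simulation $S$ with $pSq$. The cases $p=0$ and $q=0$ are direct: $\CoCo{0}{q}$ forces $I(q)\subseteq A^r$ (else the contravariant clause fails), whence $0\<q$ by iterated use of $\mathsf{S}^r_p$; dually $\CoCo{p}{0}$ gives $I(p)\subseteq A^l$ and then $p\<0$ from $\mathsf{S}^{l}_p$. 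For the inductive step, split the summands of $p$ and of $q$ by the variance of their head action, $p=p_r+p_l+p_{\mathit{bi}}$ and $q=q_r+q_l+q_{\mathit{bi}}$. Since $\{A^r,A^l,A^{\mathit{bi}}\}$ is a disjoint partition, the defining clauses of $\CoCo{p}{q}$ relate each $A^r$-summand of $p$ to an $A^r$-summand of $q$, each $A^l$-summand of $q$ to an $A^l$-summand of $p$, and match the $A^{\mathit{bi}}$-summands both ways; in each matched pair the continuations are related by a covariant-contravariant simulation, hence --- their depths being strictly smaller --- by $\<$ through the induction hypothesis. It is then enough to derive $p_r\<q_r$, $p_l\<q_l$ and $p_{\mathit{bi}}\<q_{\mathit{bi}}$ separately and conclude $p\<q$ by monotonicity of $+$ (one also checks that if $p$ has no summand of a given variance then neither does $q$, and conversely, so the split is consistent).

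For the $A^r$-part I would push each summand $a_ip_i$ of $p_r$ up to its match $a_iq_{f(i)}$ (induction hypothesis and prefixing), getting $p_r\<\sum_i a_iq_{f(i)}$, and then absorb the remaining $A^r$-summands of $q_r$ --- legitimate since they are $A^r$-prefixed, so $\mathsf{S}^r_p$ applies --- to reach $q_r$. The $A^l$-part is dual: first delete from $p_l$, using $\mathsf{S}^{l}_p$, the summands that match nothing in $q_l$ (again $A^l$-prefixed), then push the rest up to $q_l$.

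The $A^{\mathit{bi}}$-part is the step I expect to be the main obstacle, because there neither $\mathsf{S}^r_p$ nor $\mathsf{S}^{l}_p$ may be used to change the set of bivariant summands. The remedy is to duplicate summands with $\mathsf{B}_3$ before matching. With $p_{\mathit{bi}}=\sum_{i}c_ip_i$ and $q_{\mathit{bi}}=\sum_{j}c_jq_j$, choose for each $j$ an index $i_j$ with $c_{i_j}=c_j$ and $\CoCo{p_{i_j}}{q_j}$ (it exists by the contravariant clause), and compute
\[ p_{\mathit{bi}} \;=\; \sum_{i}c_ip_i \;=\; \sum_{i}c_ip_i + \sum_{j}c_jp_{i_j} \;\<\; \sum_{i}c_{f(i)}q_{f(i)} + \sum_{j}c_jq_j \;=\; \sum_{i}c_{f(i)}q_{f(i)} + q_{\mathit{bi}} \;=\; q_{\mathit{bi}}, \]
where the second equality only adds copies of summands already present ($\mathsf{B}_3$), the inequality applies the induction hypothesis and prefixing summand-wise (the covariant clause on the first sum, the contravariant one on the second), and the last equality holds because each $c_{f(i)}q_{f(i)}$ is literally a summand of $q_{\mathit{bi}}$ and is hence reabsorbed by $\mathsf{B}_3$. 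Assembling the three parts closes the induction; the remaining points --- ``$p$ missing a variance $\Rightarrow$ so is $q$'', the claim $\CoCo{0}{q}\Rightarrow I(q)\subseteq A^r$, and the depth bookkeeping --- are immediate.
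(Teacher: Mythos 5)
Your proposal is correct and follows essentially the same route as the paper's proof: soundness by exhibiting the evident covariant-contravariant simulations for $\mathsf{S}^r_p$ and $\mathsf{S}^{l}_p$, and completeness by induction after splitting summands by the variance of their head action, absorbing the unmatched $A^r$-summands of $q$ with $\mathsf{S}^r_p$, deleting the unmatched $A^l$-summands of $p$ with $\mathsf{S}^{l}_p$, and handling the bivariant summands by the standard duplication-via-idempotence trick (which the paper only gestures at as ``standard arguments for bisimulation'', so your explicit computation is if anything more careful). One parenthetical assertion is false, though harmless: it is not true that $p$ lacking a summand of a given variance forces $q$ to lack one too (e.g.\ $0\lesssim_{CC}a_r0$); your argument does not actually need this, since the cases $p_r=0$ and $q_l=0$ are already covered by iterating $\mathsf{S}^r_p$ and $\mathsf{S}^{l}_p$ respectively.
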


\begin{proof}
First we prove that the axioms $(\mathsf{S}^r_p)$ and $(\mathsf{S}^{l}_p)$ are sound for the $(A^r,A^l)$-similarity relation $\SimiCC$. Indeed: 

\begin{itemize}
\item For all $a\in A^r\cup A^\mathit{bi}$, if $x\stackrel{a}{\lra}x'$ then $x+a_r y\stackrel{a}{\lra}x'$ and $x'\SimiCC x'$.

\item For all $a\in A^l\cup A^\mathit{bi}$, if $x+a_r y\stackrel{a}{\lra}x'$, then  $x\stackrel{a}{\lra}x'$ and $x'\SimiCC x'$. Note that $a\neq
a_r$ since $A^r\cap (A^l\cup A^\mathit{bi})=\emptyset$.

\item For all $a\in A^r\cup A^\mathit{bi}$, if $x+a_l y\stackrel{a}{\lra}x'$ then $x\stackrel{a}{\lra}x'$ and $x'\SimiCC x'$ as above, because
$a\neq a_l$ again.

\item For all $a\in A^l\cup A^\mathit{bi}$, if $x\stackrel{a}{\lra}x'$, then $x+a_l y\stackrel{a}{\lra}x'$ and $x'\SimiCC x'$.
\end{itemize}

\noindent To prove completeness we consider $p\SimiCC q$ and reason by structural induction on $p$.

\begin{itemize}
\item If $p$ is $0$ then $I(q)\subseteq A^r$, since $p$ cannot simulate any action in $A^l\cup A^\mathit{bi}$. Then $q=\sum a_r q_r$ and we can
apply $(\mathsf{S}^r_p)$ to each summand in turn to get $0\< q$.

\item Let us consider $p=(\sum a_r p_r+ \sum a_l p_l + \sum a_b p_b)$, distinguishing the summands of $p$ which start with actions in either
$A^r$, $A^l$ or $A^\mathit{bi}$. We decompose $q$ in the same way to obtain $q=(\sum b_r q_r+ \sum b_l q_l + \sum b_b q_b)$. Then:

\begin{itemize}
\item For every $a_r$ there exists $b_r$, with $a_r=b_r$, such that $p_r\SimiCC q_r$ and, by induction hypothesis, $p_r\< q_r$. Then $\sum a_r p_r\< \sum b_r
q_r$. It could be the case that some summands of $\sum b_r q_r$ are never used to simulate any of the transitions of $p$, but then we can add
all those summand by using $(\mathsf{S}^r_p)$, to derive  $\sum a_r p_r\< \sum b_r q_r$.

\item For the summands $\sum a_l p_l$ and $\sum b_l q_l$ we can argue in exactly the same way, but starting with the righhand side and using
$(\mathsf{S}^{l}_p)$ instead of $(\mathsf{S}^r_p)$, to conclude now $\sum a_l p_l\< \sum b_l q_l$.

\item Finally, using standard arguments for bisimulation, we can establish a full correspondence between the summands $\sum a_b p_b$ and $\sum b_b
q_b$, having $a_b=b_b$ and $p_b\SimiCC q_b$, and by induction hypothesis we prove $\sum a_b p_b\< \sum b_b q_b$, thus concluding the proof. \qed
\end{itemize}
\end{itemize}
\end{proof}

\subsection{Conformance semantics}

Conformance simulation combines in a curious manner the features of both ordinary (covariant) and inverse (contravariant) simulation: the addition of new capabilities is always considered beneficial but, when an action is already offered, new ways to execute it are avoided since this leads to a more non-deterministic process.

To capture the first situation we need a variant of the axiom $(\mathsf{S})$ characterizing ordinary simulation:
\[(\mathsf{S}_{CS})\quad {I(p)\cap I(q)=\emptyset}\Longrightarrow {{p}\<{p+q}}.\]
For the latter, we instantiate the axiom $(\mathsf{S}^{-1})$ obtaining
\[(\mathsf{S}^{-1}_{CS})\quad {I(q)\subseteq I(p)}\Longrightarrow {{p+q}\<{p}},\]
which can be equivalently stated as

\[(\mathsf{S}^{-1}_{CS,p})\quad {{ap+aq}\<{ap}}.\]

There is, however, an important drawback: conformance simulation is not a precongruence because it is not always preserved by $+$. Indeed,
$\Conformance{0}{ab}$ and $\Conformance{ac}{ac}$, but not $\Conformance{ac}{ab+ac}$. Fortunately, to obtain a satisfactory algebraic treatment
of the conformance order it is enough to consider the weakest precongruence contained in it, as is done for weak bisimulation and the
corresponding observation congruence. Let us simply replace the axiom $(\mathsf{S}_{CS})$ by its guarded version
\[(\mathsf{S}_{CS,g})\quad {I(p)\cap I(q)=\emptyset}\Longrightarrow {{ap}\<{a(p+q)}}.\]

\begin{definition}
We define the conformance precongruence relation ${\PreConformance{p}{q}}$ by
\[{\PreConformance{p}{q}}\iff {(\Conformance{p}{q}\textrm{ and }{I(p)\supseteq I(q)})}.\]
\end{definition}

Note that the condition $I(p)\supseteq I(q)$ is not imposed recursively but just on the initial states of the processes, which corresponds to the fact that the (once) guarded axiom $(\mathsf{S}_{CS,g})$ becomes sound for the classical substitution calculus, in order to characterize the conformance precongruence $\PreConformance{}{}$.

\begin{proposition}
If the set of actions $A$ is infinite, then the precongruence relation $\PreConformance{}{}$ is the 
coarsest precongruence contained in $\Conformance{}{}$.
\end{proposition}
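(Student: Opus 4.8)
The plan is to prove the two halves of ``coarsest'': that $\PreConformance{}{}$ is itself a precongruence contained in $\Conformance{}{}$, and that it contains every precongruence contained in $\Conformance{}{}$. The first half is routine and does not use that $A$ is infinite. Containment in $\Conformance{}{}$ is immediate from the definition, and $\PreConformance{}{}$ is a preorder because $\Conformance{}{}$ is and $\supseteq$ is transitive on initial sets. For compatibility with prefixing, from a conformance simulation $R$ witnessing $\Conformance{p}{q}$ the relation $R\cup\{(ap,aq)\}$ is again a conformance simulation and $I(ap)=\{a\}=I(aq)$, so $\PreConformance{ap}{aq}$. The only operator needing care is $+$: given $\PreConformance{p}{q}$ and $\PreConformance{p'}{q'}$ with witnesses $R$ and $R'$, I would check that $R\cup R'\cup\{(p+p',q+q')\}$ is a conformance simulation; the first clause for the new pair follows from $I(p)\subseteq I(q)$ and $I(p')\subseteq I(q')$, and for the second clause, if $q+q'\stackrel{a}{\lra}r$ with, say, $q\stackrel{a}{\lra}r$, then $a\in I(q)\subseteq I(p)$ by the extra condition of $\PreConformance{p}{q}$, so $p\stackrel{a}{\lra}$ and $R$ supplies a matching derivative of $p$, hence of $p+p'$; this is precisely where the added requirement $I(p)\supseteq I(q)$ earns its keep.

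For the non-trivial half, let $\sqsubseteq'$ be any precongruence with $\sqsubseteq' \,\subseteq\, \Conformance{}{}$ and suppose $p \sqsubseteq' q$. Then $\Conformance{p}{q}$ holds, so it suffices to show $I(p)\supseteq I(q)$. Assume, towards a contradiction, that some $a\in I(q)\setminus I(p)$. Since $p$ and $q$ are finite BCCSP terms, only finitely many actions occur in them, so, using that $A$ is infinite, choose an action $c$ occurring in neither $p$ nor $q$ and set $C[x]\equiv ac0+x$. Because $\sqsubseteq'$ is a precongruence, $p \sqsubseteq' q$ yields $C[p] \sqsubseteq' C[q]$, hence $\Conformance{C[p]}{C[q]}$.

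I would then reach a contradiction by showing $\NConformance{C[p]}{C[q]}$. Pick any $q'$ with $q\stackrel{a}{\lra}q'$ (possible since $a\in I(q)$); then $C[q]\stackrel{a}{\lra}q'$. In any conformance simulation containing the pair $(C[p],C[q])$, the second clause would force a transition $C[p]\stackrel{a}{\lra}t$ with $t$ related to $q'$; but $a\notin I(p)$, so the unique $a$-derivative of $C[p]$ is $c0$, and hence $c0$ would have to be related to $q'$, which by the first clause gives $I(c0)\subseteq I(q')$, i.e.\ $c\in I(q')$. This is impossible, since $q'$ is a derivative of $q$ and so mentions only actions of $q$, none of which is $c$. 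Thus $\NConformance{C[p]}{C[q]}$, contradicting $\Conformance{C[p]}{C[q]}$; therefore $I(p)\supseteq I(q)$ and $\PreConformance{p}{q}$, so $\sqsubseteq' \,\subseteq\, \PreConformance{}{}$.

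The main obstacle is isolating the right distinguishing context, and in particular noticing that a fresh action is indispensable: with $C[x]\equiv a0+x$ the argument would collapse, because $0\lesssim_{CS}s$ holds for every $s$, so no contradiction would arise. This is exactly the step at which infiniteness of $A$ is used, and it strongly suggests that the statement genuinely fails over finite alphabets; everything else is bookkeeping with the clauses of conformance simulations.
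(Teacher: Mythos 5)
Your proof is correct and follows essentially the same route as the paper: verify directly that $\lesssim^p_{CS}$ is a precongruence (with the condition $I(p)\supseteq I(q)$ doing exactly the work you identify in the $+$ case), and refute any coarser precongruence with a summand context $ab+x$ where the continuation after $a$ uses an action that no $a$-derivative of $q$ can perform, which is where the infiniteness of $A$ enters. The only cosmetic difference is that the paper picks $b$ with $q\stackrel{a\cdot b}{\not\lra}$ rather than an action fresh for both $p$ and $q$, but the argument is the same.
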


\begin{proof}
Obviously, we have ${\PreConformance{}{}}\subseteq{\Conformance{}{}}$. If there were a larger precongruence, there would exist $p$ and $q$ with
$\Conformance{p}{q}$ but $I(q)\not\subseteq I(p)$: then, taking $a\in I(q)\setminus I(p)$ and $b\in A$ such that $q\stackrel{a\cdot b}{\not\lra}$ we would have $\NConformance{ab+p}{ab+q}$ (since $\NConformance{ab}{q}$).

Finally, both the prefix operator and $+$ preserve $\PreConformance{}{}$:

\begin{itemize}
\item If $\PreConformance{p}{q}$, then $\PreConformance{ap}{aq}$ since $I(ap)=I(aq)=\{a\}$, and for $aq\stackrel{a}{\lra}q$ we have
$ap\stackrel{a}{\lra}p$ with $\PreConformance{p}{q}$.

\item If $\PreConformance{p}{q}$, then $\PreConformance{ap+r}{aq+r}$ since $I(ap+r)=I(aq+r)=I(r)\cup\{a\}$, and for $aq+r\stackrel{a}{\lra}q$ we
have $ap+r\stackrel{a}{\lra}p$ with $\PreConformance{p}{q}$ and, whenever $aq+r\stackrel{b}{\lra}r'$ with $r\stackrel{b}{\lra}r'$, we trivially
have $ap+r\stackrel{b}{\lra}r'$.\qed
\end{itemize}
\end{proof}

\begin{proposition}
The set of axioms $\mathcal{A}_{CS}=\{\mathsf{B}_1,\mathsf{B}_2,\mathsf{B}_3,\mathsf{B}_4,\mathsf{S}_{CS,g}, \mathsf{S}^{-1}_{CS,p}\}$ is complete for the conformance precongruence relation $\PreConformance{}{}$.
\end{proposition}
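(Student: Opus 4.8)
The plan is to prove completeness by establishing that $\PreConformance{p}{q}$ implies that $p\<q$ is derivable from $\mathcal{A}_{CS}$, arguing by induction on $\mathrm{depth}(q)$. (Soundness of the six axioms for $\PreConformance{}{}$ is a routine check: the $\mathsf{B}_i$ are the usual bisimulation axioms, and for the remaining two one only needs that $\Conformance{0}{s}$ holds for every $s$ and that a guarding prefix freezes the initial sets.) As usual we may first bring $p$ and $q$ to the normal form $\sum_{a}\sum_{k}a\,p^{k}_{a}$, grouping summands by their initial action and dropping $0$-summands; this is legitimate because terms that are provably equal are bisimilar, hence $\PreConformance{}{}$-equivalent, so a term may always be replaced by a provably equal one. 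If $\mathrm{depth}(q)=0$, i.e.\ $q$ is provably $0$, then $\PreConformance{p}{q}$ forces $I(p)\subseteq I(q)=\emptyset$, so $p$ is provably $0$ and $p\<q$ holds trivially; this is the base case.

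For the inductive step, $\PreConformance{p}{q}$ gives $I(p)=I(q)$ (the inclusion $I(p)\subseteq I(q)$ comes from the underlying conformance simulation, the converse from the extra clause defining $\PreConformance{}{}$), so it suffices to derive $P_a\<Q_a$ for each $a\in I(p)=I(q)$, where $P_a=\sum_k a\,p^k_a$ and $Q_a=\sum_l a\,q^l_a$ collect the $a$-summands of $p$ and $q$; adding these inequations over all $a$ (monotonicity of $+$ being a derived rule of the inequational calculus) yields $p\<q$. The core of the argument is the following \emph{subclaim}: whenever $\Conformance{r}{s}$ and $\mathrm{depth}(s)<\mathrm{depth}(q)$, the inequation $a\,r\<a\,s$ is derivable. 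Granting it, for each $a$-summand $q^l_a$ of $q$ the conformance simulation yields an $a$-summand $p^{k(l)}_a$ of $p$ with $\Conformance{p^{k(l)}_a}{q^l_a}$; using $(\mathsf{S}^{-1}_{CS,p})$ together with $\mathsf{B}_1,\mathsf{B}_2,\mathsf{B}_3$ to erase the unused $a$-summands of $p$ and duplicate a surviving one (possible since $q$, hence $p$, has at least one $a$-summand) we obtain $P_a\<\sum_l a\,p^{k(l)}_a$, and then $\sum_l a\,p^{k(l)}_a\<\sum_l a\,q^l_a=Q_a$ by applying the subclaim to each summand.

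To prove the subclaim the two directional axioms must cooperate. From $\Conformance{r}{s}$ we get $I(r)\subseteq I(s)$; set $D=I(s)\setminus I(r)$ and $r'=r+\sum_{b\in D}b\,0$. A short computation shows $\PreConformance{r'}{s}$: the added summands raise $I(r')$ to exactly $I(s)$, and each of them answers a fresh $b$-move of $s$ via $\Conformance{0}{s'}$. Since $I(r)\cap D=\emptyset$, the guarded axiom $(\mathsf{S}_{CS,g})$ gives $a\,r\<a(r+\sum_{b\in D}b\,0)=a\,r'$ outright; and since $\mathrm{depth}(s)<\mathrm{depth}(q)$, the induction hypothesis applies to $r'$ and $s$, giving $r'\<s$, whence $a\,r'\<a\,s$ by congruence and $a\,r\<a\,s$ by transitivity.

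I expect the subclaim to be the main obstacle: conformance simulation is not a precongruence, so the induction hypothesis cannot be invoked directly on $r$ and $s$. The device is to repair $r$ by padding it with the harmless extra initial actions of $s$, producing a process that lies $\PreConformance{}{}$-below $s$, and to notice that precisely this padding is the operation that $(\mathsf{S}_{CS,g})$ can perform under the prefix $a$. The remaining pieces --- the delete-and-duplicate manipulation of $a$-summands from $(\mathsf{S}^{-1}_{CS,p})$, the reduction to normal form, and the verification of $\PreConformance{r'}{s}$ --- are all routine.
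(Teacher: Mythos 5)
Your proof is correct and follows essentially the same route as the paper's: a mutual induction between the unprefixed statement for $\PreConformance{p}{q}$ and the prefixed statement that $\Conformance{r}{s}$ yields $a\,r\<a\,s$, with $(\mathsf{S}^{-1}_{CS,p})$ discarding unmatched summands and a single application of $(\mathsf{S}_{CS,g})$ under the prefix absorbing the discrepancy between $I(r)$ and $I(s)$. The only (harmless) differences are that you repair the smaller process by padding it with dead summands $\sum_{b\in D}b\,0$ and apply $(\mathsf{S}_{CS,g})$ on the left, whereas the paper splits the larger process as $q'+r$ with $I(q')=I(p)$ and applies $(\mathsf{S}_{CS,g})$ on the right, and that you induct on the depth of $q$ rather than of $p$.
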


\begin{proof}
We show by induction on the depth of $p$ that, whenever $\PreConformance{p}{q}$ (resp. $\PreConformance{bp}{bq}$), we have $\mathcal{A}_{CS}\vdash p\< q$ (resp. $\mathcal{A}_{CS}\vdash bp\< bq$).

\begin{itemize}
\item If $\PreConformance{0}{q}$, then also $q=0$ and $0\<0$ using $(\mathsf{S}^{-1}_{CS,p})$.

\item If $\PreConformance{b0}{bq}$, then we can apply $(\mathsf{S}_{CS,g})$ with $p=0$.
\end{itemize}
Let us now consider $p=\sum_{a_i\in I(p)} a_i p_{ij}$ and $q=\sum_{a_i\in I(q)} a_i q_{ik}$.

\begin{itemize}
\item If $\PreConformance{p}{q}$ then $I(p)=I(q)$ and $\Conformance{p}{q}$, so for each $q_{ik}$ there is some $p_{ij}$ with
$\Conformance{p_{ij}}{q_{ik}}$ and therefore we can apply the second induction hypothesis to conclude that $a_ip_{ij}\< a_iq_{ik}$. It is
possible that some summands $p_{ij}$ will be paired with no $q_{ik}$ in the step above, but then we can apply the axiom
$(\mathsf{S}^{-1}_{CS,p})$ to them to conclude the proof.

\item Assume that $\PreConformance{bp}{bq}$. If $I(p)=I(q)$ then we also have $\PreConformance{p}{q}$ and this corresponds to the situation above. However, in this case we could have $I(p)\subsetneq I(q)$; then $q = q' + r$, with $r$ the summands $\sum_{a_i\in I(q)\setminus I(p)}a_i q_{ik}$, $I(p)=I(q')$, and $\PreConformance{p}{q'}$ and hence $p\< q'$. Now, we conclude the proof by applying the axiom $(\mathsf{S}_{CS,g})$ to $q'$ and $r$.\qed
\end{itemize}
\end{proof}

\section{Axiomatization of the new simulation equivalences}

Next we discuss the axiomatizability of the equivalences induced by covariant-contravariant and conformance simulations, obtaining a finite
axiomatization for the latter, and also for the first, but only when the set $A^\mathit{bi}$ of bivariant actions is empty. Instead, we also
present the impossibility result proving that covariant-contravariant simulation is not axiomatizable if we have $A^\mathit{bi}\neq\emptyset$
and $A^r\cup A^l\neq\emptyset$.

\subsection{Covariant-contravariant simulation}

Let us first consider the case in which $A^\mathit{bi}=\emptyset$. In order to axiomatize the equivalence $\equiv^{r,l}_{CC}$ induced by $(A^r,A^l)$-simulation we apply the general procedure introduced in \cite{Frutos-EscrigGP09,AcetoEtAl07,Frutos-EscrigGP09b}, based on the characterization

\[p\equiv_S p+q\Longleftrightarrow q\lesssim_S p\;.
\]

\noindent Thus we obtain:
\begin{itemize}
\item[ ] $(\mathsf{S1}^{r,l}_\equiv)\quad a_r(x+b_r y)= a_r(x+b_r y)+a_rx$.

\item[ ] $(\mathsf{S2}^{r,l}_\equiv)\quad a_r x= a_r x+ a_r(x+b_l y)$.
\end{itemize}
Obviously, the characterization above becomes unsound when contravariant prefixes appear because the pure contravariant simulation satisfies
\[q\equiv_S^{-1} p+q\Longleftrightarrow q\lesssim^{-1}_S p\;.\]

\noindent Therefore, we must reverse the inequalities above to obtain the adequate axioms for contravariant prefixes:
\begin{itemize}
\item[ ] $(\mathsf{S3}^{r,l}_\equiv)\quad a_lx= a_lx + a_l(x+b_ry)$.

\item[ ] $(\mathsf{S4}^{r,l}_\equiv)\quad a_l(x+b_ly)= a_l(x+b_ly)+ a_lx$.
\end{itemize}

Now we would expect the set of axioms
$\mathcal{A}_{CC}^\equiv=\{\mathsf{B}_1,\mathsf{B}_2,\mathsf{B}_3,\mathsf{B}_4,\mathsf{S1}^{r,l}_\equiv,\mathsf{S2}^{r,l}_\equiv,\mathsf{S3}^{r,l}_\equiv,$
$\mathsf{S4}^{r,l}_\equiv \}$ to axiomatize $(A^r,A^l)$-simulation equivalence. Certainly, all the axioms in this set are sound;  in order to
prove completeness in the absence of actions $A^\mathit{bi}$, we start by stating the following lemma that gives us two useful derived axioms.

\begin{lemma}\label{lemma-completeness}
The following equalities are derivable:
\[
\begin{array}{c@{\qquad}l}
\{\mathsf{S1}^{r,l}_\equiv, \mathsf{S2}^{r,l}_\equiv\}\vdash\ a_r(x+p_r)=a_r(x+p_r) + a_r(x+p_l)& (\mathsf{DS1}^{r,l}_\equiv)\\
\{\mathsf{S3}^{r,l}_\equiv, \mathsf{S4}^{r,l}_\equiv\}\vdash a_l(x+p_l)=a_l(x+p_l) + a_l(x+p_r)& (\mathsf{DS2}^{r,l}_\equiv)
\end{array}\]
where $p_r$ (resp. $p_l$) denotes any process prefixed by actions in $A^r$ (resp. $A^l$); more formally, $p_r=\sum_{i\in I}a^i_r p_i$ (resp. $p_l=\sum_{j\in J}a^j_l p_j$).
\end{lemma}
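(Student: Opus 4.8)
The plan is to derive $(\mathsf{DS1}^{r,l}_\equiv)$ from $\{\mathsf{S1}^{r,l}_\equiv,\mathsf{S2}^{r,l}_\equiv\}$; the derivation of $(\mathsf{DS2}^{r,l}_\equiv)$ from $\{\mathsf{S3}^{r,l}_\equiv,\mathsf{S4}^{r,l}_\equiv\}$ is completely symmetric, obtained by exchanging the roles of $r$ and $l$ and using $\mathsf{S4}^{r,l}_\equiv$ where $\mathsf{S1}^{r,l}_\equiv$ is used and $\mathsf{S3}^{r,l}_\equiv$ where $\mathsf{S2}^{r,l}_\equiv$ is used. Throughout I would use $\mathsf{B}_1$--$\mathsf{B}_4$ implicitly, as announced. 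First I would isolate two auxiliary derivable equalities, each proved by induction on the number of top-level summands of $p_r$ (resp.\ $p_l$):
\[\{\mathsf{S1}^{r,l}_\equiv\}\vdash a_r(x+p_r)=a_r(x+p_r)+a_rx \qquad (\mathrm{A})\]
\[\{\mathsf{S2}^{r,l}_\equiv\}\vdash a_rx=a_rx+a_r(x+p_l) \qquad (\mathrm{B})\]

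For $(\mathrm{A})$: if $p_r$ is the empty sum, $(\mathrm{A})$ is just idempotency of $+$. Otherwise write $p_r=a_r^{i_0}p_{i_0}+p_r'$, where $p_r'$ has one summand fewer. Instantiating $\mathsf{S1}^{r,l}_\equiv$ with $x:=x+p_r'$ and $b_ry:=a_r^{i_0}p_{i_0}$ yields $a_r(x+p_r)=a_r(x+p_r)+a_r(x+p_r')$; applying the induction hypothesis to $p_r'$ rewrites the summand $a_r(x+p_r')$ as $a_r(x+p_r')+a_rx$, and reading the first equation once more from right to left absorbs the spurious $a_r(x+p_r')$, leaving $(\mathrm{A})$. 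For $(\mathrm{B})$: the empty case is again idempotency. Otherwise write $p_l=p_l'+a_l^{j_0}p_{j_0}$. The induction hypothesis for $p_l'$ gives $a_rx=a_rx+a_r(x+p_l')$, call it $(\star)$, and instantiating $\mathsf{S2}^{r,l}_\equiv$ with $x:=x+p_l'$ and $b_ly:=a_l^{j_0}p_{j_0}$ gives $a_r(x+p_l')=a_r(x+p_l')+a_r(x+p_l)$. Substituting the latter into $(\star)$ produces $a_rx=a_rx+a_r(x+p_l')+a_r(x+p_l)$, and using $(\star)$ once more from right to left deletes the intermediate $a_r(x+p_l')$ summand, which is precisely $(\mathrm{B})$.

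Chaining $(\mathrm{A})$ and $(\mathrm{B})$ and applying $(\mathrm{A})$ from right to left a last time then gives $a_r(x+p_r)=a_r(x+p_r)+a_rx=a_r(x+p_r)+a_rx+a_r(x+p_l)=a_r(x+p_r)+a_r(x+p_l)$, i.e.\ $(\mathsf{DS1}^{r,l}_\equiv)$. I do not expect any conceptual obstacle here; the only point requiring care is the bookkeeping of the auxiliary summands introduced along the way, which is handled uniformly by reading the already-derived equations in both directions together with associativity, commutativity and idempotency of $+$. The one thing worth double-checking is that the instantiations of $\mathsf{S1}^{r,l}_\equiv$ and $\mathsf{S2}^{r,l}_\equiv$ above are legal, i.e.\ that each summand peeled off or added is indeed of the shape $b_r(\cdot)$ (resp.\ $b_l(\cdot)$) demanded by the axiom schemes — which holds exactly because of the hypotheses $p_r=\sum_{i\in I}a_r^ip_i$ and $p_l=\sum_{j\in J}a_l^jp_j$.
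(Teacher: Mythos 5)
Your proposal is correct and follows essentially the same route as the paper: both first derive the two auxiliary equalities $a_r(x+p_r)=a_r(x+p_r)+a_rx$ and $a_rx=a_rx+a_r(x+p_l)$ by induction on the number of summands (peeling off one summand, applying $\mathsf{S1}^{r,l}_\equiv$ or $\mathsf{S2}^{r,l}_\equiv$, and absorbing the intermediate term by rereading the derived equation), and then chain them and absorb $a_rx$ to obtain $(\mathsf{DS1}^{r,l}_\equiv)$, with $(\mathsf{DS2}^{r,l}_\equiv)$ by symmetry. No gaps.
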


\begin{proof}
We only show the case of $(\mathsf{DS1}^{r,l}_\equiv)$. We start by proving that $a_r (x+p_r) = a_r (x+p_r) + a_r x$ by induction over the size $|I|$ of $I$.
\begin{itemize}
\item If $|I|=0$, the result is trivial.
\item If $|I|=1$, we immediately obtain the result by applying the axiom $(\mathsf{S1}^{r,l}_\equiv)$.
\item For $|I|>1$, we take $I={I'}\cup \{i\}$ with $|I'|=|I|-1$. Note that $a_r (x+p_r)=a_r ((x+p'_r)+a^i_r p_i)$ so that, applying axiom $(\mathsf{S1}^{r,l}_\equiv)$, we obtain
\[a_r (x+p_r)=a_r ((x+p'_r)+a^i_r p_i)+a_r (x+p'_r)= a_r (x+p_r)+ a_r (x+p'_r).
\]
Using the induction hypothesis with the term $a_r (x+p'_r)$ leads to
\[a_r (x+p_r)  = a_r (x+p_r) + a_r (x+p'_r)  + a_r x,\]
and, reusing the equality $a_r (x+p_r)+ a_r (x+p'_r)=a_r (x+p_r)$ above, we obtain
\begin{equation}\label{eq1}
a_r (x+p_r) = a_r (x+p_r) + a_r x
\end{equation}
as desired.
\end{itemize}

\noindent Now, we can analogously prove the
equality
\begin{equation}\label{eq2}
a_r x = a_r x + a_r (x+ p_l).
\end{equation}
Replacing $a_r x$ in equation~\ref{eq1} by the righthand side of
equation~\ref{eq2} produces
\[a_r (x+p_r) = a_r (x+p_r) + a_r x + a_r (x+ p_l)
\]
and, applying equation~\ref{eq1} again, we finally obtain
$(\mathsf{DS1}^{r,l}_\equiv)$:
\[a_r (x+p_r) = a_r (x+p_r) + a_r (x+ p_l).
\]
\qed
\end{proof}

For the main proof we have to adapt the classic technique for the completeness of the axiomatization of the plain simulation semantics ($p\lesssim_S q$ implies
$\mathcal{A}_S\vdash q=p+q$), taking into account the difference between covariant and contravariant actions. For technical reasons we need to
consider a ``free'' arbitrary term $r$.

\begin{proposition}\label{prop4}
If $p\lesssim_{CC} q$ then, for all processes $r$:
\[
\mathcal{A}_{CC}^{\equiv}\vdash a_r(q+r)=a_r(q+r) + a_r(p+r)
\]
and
\[
\mathcal{A}_{CC}^{\equiv}\vdash a_l(p+r)=a_l(p+r) + a_l(q+r).
\]
\end{proposition}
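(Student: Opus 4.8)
The plan is to prove both equalities simultaneously by induction on the depth of $p$, the arbitrary term $r$ merely coming along for the ride at each step. Since $A^{\mathit{bi}}=\emptyset$ we may split $p=p_r+p_l$ and $q=q_r+q_l$ into their covariant-prefixed and contravariant-prefixed summands, and unfolding $p\lesssim_{CC}q$ gives: every summand $ap_i$ of $p$ with $a\in A^r$ is matched by a summand $aq^i$ of $q$ with $p_i\lesssim_{CC}q^i$; and every summand $bq'_j$ of $q$ with $b\in A^l$ is matched by a summand $bp'_j$ of $p$ with $p'_j\lesssim_{CC}q'_j$. All the relevant continuations $p_i$ and $p'_j$ have smaller depth than $p$, so the induction hypothesis (the statement of the proposition itself) applies to each matched pair.

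Before the main computation I would record the elementary facts that get used repeatedly. Iterating $(\mathsf{S1}^{r,l}_\equiv)$ together with $\mathsf{B}_3$ shows that under a prefix $a_r$ one may delete covariant-prefixed summands from the body, the shorter term being absorbed: $a_r(X+c)=a_r(X+c)+a_r(X)$ for covariant-prefixed $c$; iterating $(\mathsf{S2}^{r,l}_\equiv)$ shows that adding contravariant-prefixed summands is absorbed: $a_r(X)=a_r(X)+a_r(X+d)$ for contravariant-prefixed $d$. No garbage is left by the iteration, since every intermediate term is itself absorbed by the largest one. The symmetric facts hold under a prefix $a_l$, now via $(\mathsf{S4}^{r,l}_\equiv)$ and $(\mathsf{S3}^{r,l}_\equiv)$: deleting contravariant-prefixed summands and adding covariant-prefixed summands are both absorbed. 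Finally, instantiating the induction hypothesis with the free term set to $0$ on the matched continuations, together with $\mathsf{B}_1$--$\mathsf{B}_4$, yields the two ``swallowing'' identities $\mathcal{A}_{CC}^{\equiv}\vdash q+p_r=q$, hence $q+r+p_r=q+r$, and $\mathcal{A}_{CC}^{\equiv}\vdash p+q_l=p$, hence $p+r+q_l=p+r$.

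The first equality then follows from a short chain of rewrites that keeps the covariant content of $p$ intact while trimming that of $q$:
\[a_r(q+r)=a_r(q+r+p_r)=a_r(q+r+p_r)+a_r(p_r+r+q_l)=a_r(q+r+p_r)+a_r(p_r+r+q_l)+a_r(p+r),\]
where the first step uses $q+r+p_r=q+r$, the second deletes the covariant part $q_r$ of the body, and the third adds the contravariant part $p_l$ and then rewrites $p_r+r+q_l+p_l=p_r+r+p_l=p+r$ via $p+q_l=p$; re-applying the deletion identity and $q+r+p_r=q+r$ now collapses the two auxiliary summands, leaving $a_r(q+r)=a_r(q+r)+a_r(p+r)$. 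The second equality is the mirror image under the prefix $a_l$: from $a_l(p+r)=a_l(p+r+q_l)$ one deletes the contravariant part $p_l$ of the body, then adds the covariant part $q_r$, and identifies $p_r+r+q_l+q_r=q_r+r+q_l=q+r$ via $q+p_r=q$; collapsing the auxiliary summands gives $a_l(p+r)=a_l(p+r)+a_l(q+r)$. The base case $p=0$ is this very chain with $p_r$, $p_l$ and, forced by $0\lesssim_{CC}q$, also $q_l$ all empty, and degenerates to a single application of the deletion (respectively addition) identity.

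The one point that needs care is the ``polarity bookkeeping'': under $a_r$ a covariant summand behaves like a plain-simulation summand, so it can always be deleted but may only be introduced when justified by the induction hypothesis, while a contravariant summand behaves dually, and the two roles are exchanged under $a_l$. Hence the rewriting must be routed so that the covariant (respectively contravariant) content of $p$ is produced only through the swallowing identities $q+p_r=q$ and $p+q_l=p$ supplied by the induction hypothesis, never through a bare axiom; keeping these routes straight, and checking that the auxiliary summands really do get reabsorbed, is the whole content of the argument. The derived equalities $(\mathsf{DS1}^{r,l}_\equiv)$ and $(\mathsf{DS2}^{r,l}_\equiv)$ of Lemma~\ref{lemma-completeness} can be used to compress some of the intermediate steps, but are not strictly necessary for this proposition.
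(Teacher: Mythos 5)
Your proof is correct, and it uses the same basic ingredients as the paper's (induction on the depth of $p$, the split $p=p_r+p_l$, $q=q_r+q_l$, the derived absorption equalities of Lemma~\ref{lemma-completeness}, and the induction hypothesis applied at $r=0$ to the matched continuations), but it glues the covariant and contravariant halves together differently. The paper first establishes the two equations for the pure sub-cases $p_r\lesssim_{CC}q_r$ and $p_l\lesssim_{CC}q_l$ and then merges them by instantiating the free parameter $r$ once with $q_l+r'$ and once with $p_r+r'$ and chaining the resulting instances; this is precisely why the free $r$ must appear in the statement for the paper's induction to close. You instead lift the induction hypothesis to the two top-level ``swallowing'' identities $\mathcal{A}_{CC}^{\equiv}\vdash q+p_r=q$ and $\mathcal{A}_{CC}^{\equiv}\vdash p+q_l=p$ and then perform a single rewriting chain under the prefix, with $r$ genuinely inert throughout. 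Your rewriting chain checks out: every deletion is of covariant-prefixed material under $a_r$ (contravariant-prefixed under $a_l$), every addition is of the dual polarity, the productions of $p_r$ and $q_l$ inside the bodies are routed exclusively through the swallowing identities supplied by the induction hypothesis, and the two auxiliary summands are reabsorbed by reading the same absorption equalities right-to-left. A small dividend of your organisation is that the universally quantified $r$ is no longer load-bearing for the induction (it is only ever instantiated at $0$ in the hypothesis), whereas in the paper it is essential; the price is the extra step of deriving and summing the per-summand instances of the induction hypothesis into the identities $q+p_r=q$ and $p+q_l=p$, which the paper's instantiation trick avoids.
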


\begin{proof}
We proceed by induction on the depth of $p$. We start by decomposing both $p$ and $q$ as follows: $p=p_r+p_l$, $q=q_r+q_l$, where $p_r=\sum_{i\in I_{p_r}}a^i_{r} p_i$, $p_l=\sum_{i\in I_{p_l}}a^i_l
p_i$, $q_r=\sum_{i\in I_{q_r}}a^i_r q_i$ and $q_l=\sum_{i\in I_{q_l}}a^i_l q_i$. Then, it is clear that the depths of both $p_r$ and $p_l$ are less or equal than the depth of $p$ and besides we have ${p\lesssim_{CC}q}\Longleftrightarrow{{p_r\lesssim_{CC}q_r}\wedge{p_l\lesssim_{CC}q_l}}$.

Next, let us consider $p_r\lesssim_{CC} q_r$: this is an instance of the hypothesis of the statement to prove, which corresponds to the particular case in which $I(p)\cup
I(q)\subseteq A_r$. Then, we need to prove both
\[
\mathcal{A}_{CC}^{\equiv}\vdash a_r(q+r)=a_r(q+r) + a_r(p+r)
\]
and
\[
\mathcal{A}_{CC}^{\equiv}\vdash a_l(p+r)=a_l(p+r) + a_l(q+r).
\]

Let us consider in detail the second statement.
\begin{itemize}
\item If $p=0$, it follows that $\mathcal{A}_{CC}^{\equiv}\vdash a_lr=a_lr + a_l(q+r)$ by an application of the equation
$(\mathsf{DS2}^{r,l}_\equiv)$, with $p_l=0$, $x=r$, and $p_r=q$.

\item If $p=\sum_{i\in I}a_r^i p'_i$ and $q=\sum_{i\in J}a_r^i p'_i$, from $p\lesssim_{CC}q$ it follows, without loss of generality, that $I\subseteq J=I\cup J'$ and then we take $J=I\cup J'$ with $J'$ chosen such that $J'\cap I=\emptyset$, with $p'_i\lesssim_{CC}q_i'$ for all $i\in I$. Now,
by induction hypothesis, $\mathcal{A}_{CC}^{\equiv}\vdash a_r^i q'_i=a_r^i q'_i + a_r^i p'_i$. Next we obtain $\mathcal{A}_{CC}^{\equiv}\vdash \sum_{i\in I} a_r^i
q'_i=\sum_{i\in I} a_r^i q'_i + p$ and hence, by adding $\sum_{i\in J'} a_r^i q'_i$ to both sides, $\mathcal{A}_{CC}^{\equiv}\vdash q=q+p$, by congruence, we have $\mathcal{A}_{CC}^{\equiv}\vdash q+r=q+p+r$. Now, by applying $(\mathsf{DS2}^{r,l}_\equiv)$ with $x=p+r$, $p_l=0$,
and $p_r=q$, we obtain $\mathcal{A}_{CC}^{\equiv}\vdash a_l(p+r)=a_l(p+r)+a_l(p+r+q)$ which, combined with the previous equation, finally leads
to $\mathcal{A}_{CC}^{\equiv}\vdash a_l(p+r)=a_l(p+r)+a_l(q+r)$.
\end{itemize}

The first statement above is proved in a similar way, and the ones arising from $p_l\lesssim q_l$ can be dealt with analogously.

To conclude, we consider the general case $\CoCo{p}{q}$. By applying the results obtained above, starting from both $\CoCo{p_r}{q_r}$ and $\CoCo{p_l}{q_l}$, we have
\[
\mathcal{A}_{CC}^{\equiv}\vdash a_r(q_r+r)=a_r(q_r+r) + a_r(p_r+r)
\]
and
\[
\mathcal{A}_{CC}^{\equiv}\vdash a_r(q_l+r)=a_r(q_l+r) + a_r(p_l+r).
\]
In particular, making $r$ equal to $q_l+r'$ in the first equality:
\[ \mathcal{A}_{CC}^{\equiv}\vdash a_r(q_r+q_l+r')=a_r(q_r+q_l+r') + a_r(p_r+q_l+r').\]
(It is at this point that the ``free'' variable $r$ in the statement is needed, so as to be able to proceed by instantiating it in a suitable
manner). Now, instantiating $r$ with $p_r+r'$ in the second derived equation:
\[\mathcal{A}_{CC}^{\equiv}\vdash a_r(p_r+q_l+r')=a_r(p_r+q_l+r') + a_r(p_r+p_l+r').\]

If we now combine the last two equations we can obtain
\[ \mathcal{A}_{CC}^{\equiv}\vdash a_r(q_r+q_l+r')=a_r(q_r+q_l+r') + a_r(p_r+p_l+r'),\]
and, since $r'$ is arbitrary, we finally get
\[
\mathcal{A}_{CC}^{\equiv}\vdash a_r(q+r)=a_r(q+r) + a_r(p+r).
\]

We can proceed in a similar way for $a_l$, thus obtaining
\[
\mathcal{A}_{CC}^{\equiv}\vdash a_l(p+r)=a_l(p+r) + a_l(q+r).
\]

And this concludes the proof.\qed
\end{proof}

The main theorem is now at hand.

\begin{theorem}
Whenever $A=A^r\cup A^l$, the set of axioms
$\mathcal{A}_{CC}^\equiv=\{\mathsf{B}_1,\mathsf{B}_2,\mathsf{B}_3,\mathsf{B}_4,\mathsf{S1}^{r,l}_\equiv,\mathsf{S2}^{r,l}_\equiv,\mathsf{S3}^{r,l}_\equiv,$
$\mathsf{S4}^{r,l}_\equiv \}$ is complete for $(A^r,A^l)$-simulation equivalence.
\end{theorem}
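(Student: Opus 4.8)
The plan is to derive the theorem from Proposition~\ref{prop4} by a summand-wise absorption argument; no further induction is needed, since Proposition~\ref{prop4} (resting on Lemma~\ref{lemma-completeness}) has already absorbed it. Recall that $\equiv^{r,l}_{CC}$ is the kernel of $\lesssim_{CC}$, so I would start from $p\lesssim_{CC}q$ \emph{and} $q\lesssim_{CC}p$ and aim to establish the two one-directional derivations $\mathcal{A}_{CC}^{\equiv}\vdash q+p=q$ and $\mathcal{A}_{CC}^{\equiv}\vdash p+q=p$; the desired equality $p=q$ then follows at once, since $p=p+q=q+p=q$ using $(\mathsf{B}_1)$.

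For $\mathcal{A}_{CC}^{\equiv}\vdash q+p=q$ I would show, using only $(\mathsf{B}_1)$--$(\mathsf{B}_4)$ for the bookkeeping of summands, that every summand $\alpha p'$ of $p$ (that is, every transition $p\stackrel{\alpha}{\lra}p'$) is absorbed by $q$, i.e. $\mathcal{A}_{CC}^{\equiv}\vdash q+\alpha p'=q$. Here the hypothesis $A=A^r\cup A^l$ is used so that $\alpha$ is either covariant or contravariant. If $\alpha\in A^r$, then $p\lesssim_{CC}q$ yields a matching transition $q\stackrel{\alpha}{\lra}q'$ with $p'\lesssim_{CC}q'$; instantiating the ``free'' term $r$ of Proposition~\ref{prop4} with $0$ (and using $(\mathsf{B}_4)$) gives $\mathcal{A}_{CC}^{\equiv}\vdash \alpha q'=\alpha q'+\alpha p'$, and since $\alpha q'$ is a summand of $q$ we conclude $\mathcal{A}_{CC}^{\equiv}\vdash q+\alpha p'=q$. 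If instead $\alpha\in A^l$, then I would use $q\lesssim_{CC}p$: its contravariant clause applied to $p\stackrel{\alpha}{\lra}p'$ provides $q\stackrel{\alpha}{\lra}q'$ with $q'\lesssim_{CC}p'$, and the second conclusion of Proposition~\ref{prop4} --- now with the roles of its $p$ and $q$ played by $q'$ and $p'$, and again $r=0$ --- yields $\mathcal{A}_{CC}^{\equiv}\vdash \alpha q'=\alpha q'+\alpha p'$ with $\alpha q'$ a summand of $q$. (In both cases I rely on the fact that the action letters in the axiom schemas range over all of $A^r$ and $A^l$, so that Proposition~\ref{prop4} applies to the specific action $\alpha$.) Iterating over all summands of $p$ gives $\mathcal{A}_{CC}^{\equiv}\vdash q+p=q$; the derivation of $\mathcal{A}_{CC}^{\equiv}\vdash p+q=p$ is completely symmetric, absorbing each summand of $q$ into $p$ using $q\lesssim_{CC}p$ for its covariant summands and $p\lesssim_{CC}q$ for its contravariant ones.

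I do not expect a genuine obstacle at this stage; the substantial work has already been done in Lemma~\ref{lemma-completeness} and Proposition~\ref{prop4}, and what remains is careful bookkeeping. The two points to watch are, first, the variance discipline --- a \emph{covariant} summand of $p$ is absorbed into $q$ by invoking $p\lesssim_{CC}q$, whereas a \emph{contravariant} summand of $p$ is absorbed by invoking $q\lesssim_{CC}p$, which is exactly where the asymmetry between the two defining clauses of $\lesssim_{CC}$ surfaces --- and, second, the observation that the auxiliary ``free'' term $r$ that was indispensable in the inductive proof of Proposition~\ref{prop4} is here simply set to $0$. Finally, the hypothesis $A^{\mathit{bi}}=\emptyset$ is precisely what allows each summand to be headed by an action of one of the two pure kinds, so that no separate bisimulation-style matching of bivariant summands is needed; the companion impossibility result shows that once a bivariant action is present together with a non-bivariant one no such finite axiomatization can exist.
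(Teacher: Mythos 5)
Your proof is correct, and it takes a genuinely different (and shorter) route than the paper's. The paper proves the theorem by induction on the depth of $p$: it decomposes $p$ and $q$ into covariant and contravariant summands, runs a chain-chasing argument ($i\mapsto i'\mapsto i''\mapsto\cdots$, closing a cycle because $I$ is finite) to isolate for each process a ``core'' of summands that are pairwise $\equiv_{CC}$-equivalent, applies the induction hypothesis to those equivalent pairs, and only then uses Proposition~\ref{prop4} with $r=0$ to absorb the remaining, merely dominated, summands. You dispense with both the extra induction and the cycle argument by exploiting the symmetry of the equivalence at the root: every summand $\alpha p'$ of $p$ is absorbed into $q$ directly, invoking $p\lesssim_{CC}q$ and the first conclusion of Proposition~\ref{prop4} when $\alpha\in A^r$, and $q\lesssim_{CC}p$ together with the second conclusion when $\alpha\in A^l$ (which is exactly where $A^{\mathit{bi}}=\emptyset$ is needed), and symmetrically for the summands of $q$; the equivalent ``core'' pairs of the paper's proof are then handled as a special case of two-sided absorption rather than by an appeal to an induction hypothesis. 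Your variance bookkeeping is right --- a contravariant summand of $p$ is matched via the second clause of $q\lesssim_{CC}p$, yielding $q'\lesssim_{CC}p'$ and hence $\alpha q'=\alpha q'+\alpha p'$ with $\alpha q'$ a summand of $q$ --- and the derivation $p=p+q=q+p=q$ closes the argument. What your version buys is economy and a cleaner separation of concerns (all the induction lives in Proposition~\ref{prop4}); what the paper's version exhibits more explicitly is the summand-matching structure underlying the equivalence, in the style of the general ``ready to preorder'' machinery it adapts.
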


\begin{proof}
Let $p\equiv_{CC}q$: we need to prove $\mathcal{A}_{CC}^\equiv\vdash p=q$. The proof will follow by induction on the depth of $p$.
\begin{itemize}
\item If $p=0$ we obviously have $q=0$.
\item Let $p=\sum_{i\in I}a_r^ip_r^i+\sum_{j\in J}a_l^jp_l^j$ and $q=\sum_{i\in I'}a_r^iq_r^i+\sum_{j\in J'}a_l^jq_l^j$. Then,
\begin{itemize}
\item for each $i\in I$, there exists some $i'\in I'$ with $a^i_r=a^{i'}_r$ and $p_r^i\s_{CC}q^{i'}_r$, and

\item for each $i'\in I'$ there exists some $i''\in I$ with $a^{i'}_r=a^{i''}_r$ and $q_r^{i'}\s_{CC}p^{i''}_r$.
\end{itemize}
Obviously, it could be the case that $i\neq i''$. Then, we could repeat the same argument with $i_1=i''$, and with $i_2=i_1''$, \dots, to obtain a sequence $(i, i_1, i_2,\dots)$. Since $|I|<\infty$, eventually we will find $i_m = i_n$ and,
hence,
\begin{itemize}
\item for each $i\in I$ we obtain $i'\in I'$ and $i''\in I$ such that $a^i_r=a^{i'}_r=a^{i''}_r$, $\CoCo{p^i_r}{q^{i'}_r}$ and
$p^{i''}_r\equiv_{CC}q^{i'}_r$.
\end{itemize}

Of course, we can repeat the same reasoning starting with $i'\in I'$ as well as for the contravariant summands in a dual way, to obtain the
following decompositions:
\[p=\sum_{i\in I}a^i_rp^i_r+\sum_{k\in K}a^k_rp^k_r+\sum_{k'\in K'}a^{k'}_lp^{k'}_l+\sum_{m\in M}a^m_lp^m_l
\]
\noindent and,
\[q=\sum_{i'\in I'}a^{i'}_rq^{i'}_r+\sum_{k\in K}a^k_rq^k_r+\sum_{k'\in K'}a^{k'}_lq^{k'}_l+\sum_{m'\in M'}a^{m'}_lq^{m'}_l
\]
\noindent where:
\begin{itemize}
\item for all $i\in I$, there exists $k\in K$ such that $a^i_r=a^k_r$ and $\CoCo{p^i_r}{p^k_r}$; and

\item for all $m\in M$, there exists $k'\in K'$ such that $a^m_l=a^{k'}_l$ and $\CoCo{p^{k'}_l}{p^m_l}$; and

\item for all $i'\in I'$, there exists $k\in K$ such that $a^{i'}_r=a^{k}_r$ and $\CoCo{q^{i'}_r}{q^k_r}$; and

\item for all $m'\in M'$, there exists $k'\in K'$ such that $a^{m'}_l=a^{k'}_l$ and $\CoCo{q^{k'}_l}{q^m_l}$; and

\item for all $k\in K$, $p^{k}_r\equiv_{CC}q^k_r$; and

\item for all $k'\in K'$, $p^{k'}_l\equiv_{CC}q^{k'}_l$.
\end{itemize}

Then we can apply the induction hypothesis to any pair $(p^k_r,q^k_r)$ and also to any pair $(p^{k'}_l,q^{k'}_l)$. To conclude the proof we only
need to apply Proposition~\ref{prop4}, taking $r=0$, to any such pairs $(p^i_r,p^k_r)$ and $(p^{k'}_l,p^{m}_l)$, and analogously for the
components of $q$.\qed
\end{itemize}
\end{proof}

The addition of bivariant actions (assuming that there are already other
actions present) changes the picture completely.
Now, it is no longer possible to axiomatize the equivalence.

\begin{theorem}
If $A^\mathit{bi}\neq\emptyset$ and $A^r\cup A^l\neq\emptyset$, then $(A^r,A^l)$-simulation equivalence is not finitely axiomatizable.
\end{theorem}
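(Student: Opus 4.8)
The plan is to exhibit an infinite family of $\equiv_{CC}$-valid equations that no finite sound axiom system can prove. By the symmetry between the covariant and the contravariant roles it suffices to treat the case $A^r\neq\emptyset$; so fix a covariant action $a\in A^r$ and a bivariant action $b\in A^{\mathit{bi}}$ (if instead only $A^l\neq\emptyset$, replace $a$ by an action of $A^l$ and reverse all the inequalities below). The first task is to pin down \emph{why} a bivariant action is harmful. When $A^{\mathit{bi}}=\emptyset$ the finite axiomatization works because every sound absorption reduces, through the axioms of the previous subsection, to adding covariant- or contravariant-headed summands, which is a purely local operation. A bivariant $b$ breaks this: since on the bivariant sub-alphabet $\lesssim_{CC}$ behaves like bisimilarity (as already used in the proof of Proposition~1), the set of $b$-derivatives of a term is $\equiv_{CC}$-rigid up to $\lesssim_{CC}$-domination, and one checks directly that $b w_1 + b w_2 + b w = b w_1 + b w_2$ is sound exactly when $w$ is $\lesssim_{CC}$-sandwiched, $w_1\lesssim_{CC}w\lesssim_{CC}w_2$ — a condition that is inherently non-local once $w_1$, $w_2$, $w$ are deep.

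Accordingly I would take a family $e_n$ of exactly this shape, with $w_1^n$, $w_2^n$, $w^n$ instances of the strictly increasing chain $c_k:=(ab)^k 0$ (for which $c_j\lesssim_{CC}c_k$ iff $j\le k$, by an easy induction on $k$; in particular the $c_k$ are pairwise $\not\equiv_{CC}$), chosen so that $w^n$ sits between $w_1^n$ and $w_2^n$ only "at depth about $n$"; then each $e_n$ is sound by the computation sketched above, but any justification of it must inspect a $b$-path of length $\approx n$. The impossibility result then follows by the usual reduction. Assume $\mathcal{E}$ is a finite set of equations, sound and complete for $\equiv_{CC}$, and let $N$ exceed the depth of every term occurring in $\mathcal{E}$. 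Introduce a relation $\approx_N$ on terms — roughly, "$\equiv_{CC}$-equal, with identical $b$-branching along all paths of length $<N$" — and prove: (i)~$\approx_N$ is a congruence relating the two sides of every instance $\sigma(s)=\sigma(t)$ of an axiom of $\mathcal{E}$, because an $\equiv_{CC}$-sound equation of depth $<N$ can neither create nor destroy $b$-branching below level $N$, it merely copies it from $\sigma$ on both sides; (ii)~hence every equation derivable from $\mathcal{E}$ holds up to $\approx_N$; (iii)~the two sides of $e_N$ are $\equiv_{CC}$-equal but not $\approx_N$-related, since the sandwiching that makes them equal only surfaces on a $b$-path of length $\approx N$. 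By (ii) and (iii), $\mathcal{E}\not\vdash e_N$, contradicting completeness; hence $\equiv_{CC}$ is not finitely axiomatizable.

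The delicate step is clause~(i): an equation of bounded depth may contain variables, and an adversarial substitution could try to smuggle a deep chain $c_N$ into a position where it would realise the sandwiching absorption "for free". The argument has to show that precisely because the equation is shallow \emph{and} $\equiv_{CC}$-sound, whatever $b$-branching appears beyond level $N$ on either side is inherited verbatim from $\sigma$ and is therefore the same on the two sides, so that no bounded-depth equation can do what $e_N$ does. Getting the definitions of $w_1^n$, $w_2^n$, $w^n$ and of $\approx_N$ to fit together so that this is transparent is where essentially all of the (still modest) effort lies; once the set-up is correct the contradiction is immediate, which is why the introduction can call the proof extraordinarily simple.
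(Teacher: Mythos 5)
Your overall plan (an infinite family of sound absorption laws whose validity is only ``visible'' at unbounded depth, plus a depth-bounded invariant preserved by every instance of a shallow sound axiom) is the right shape, and your diagnosis of why bivariant actions are harmful is essentially the paper's. But as it stands the proposal is a programme, not a proof, and the missing part is exactly where all the content lies. Concretely: (a) the family is never pinned down --- ``chosen so that $w^n$ sits between $w_1^n$ and $w_2^n$ only at depth about $n$'' is not a definition, and the choice matters, because for the chain $c_k=(ab)^k0$ the witness of $c_i\lesssim_{CC}c_j$ bottoms out at depth $2i$, so the indices must grow with $n$ in a specific way; (b) the relation $\approx_N$ is never defined; and (c) clause~(i), the preservation lemma, is asserted rather than proved, and in the form you state it (``a sound equation of depth $<N$ can neither create nor destroy $b$-branching below level $N$'') it is actually \emph{false}: sound axioms such as $a_r(x+a_ry)=a_r(x+a_ry)+a_rx$ duplicate whatever deep $b$-branching the substitution puts into $x$, so any workable $\approx_N$ must already be a truncation-up-to-(bi)simulation relation, and proving it is a congruence for all shallow sound instances is precisely the hard step. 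A further small inaccuracy: $bw_1+bw_2+bw=bw_1+bw_2$ is sound iff $w$ dominates \emph{some} summand and is dominated by \emph{some} summand; ``sandwiched between $w_1$ and $w_2$'' is only the special case where the two summands are comparable.

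For comparison, the paper takes $p_n=a_ra_{bi}^na_r0$ and $q_n=p_n+a_ra_{bi}^n0$: a two-way choice under a single \emph{covariant} prefix whose branches agree for $n{+}1$ levels and differ only at the bottom (where $0\lesssim_{CC}a_r0$ licenses the absorption). This buys two things your set-up lacks. First, the invariant is concrete and checkable: every term derivable from $p_n$ satisfies $p^j_n\!\downarrow\!(n{+}2)\sim p_n$, i.e.\ its truncation at depth $n{+}2$ is bisimilar to the linear chain, which $q_n$ violates. Second, the two cases of the preservation argument are handled by explicit means: applications below the root can only rewrite a rigid subprocess $a_{bi}^ma_r0$ into something whose $(m{+}1)$-truncation is bisimilar to it, and applications at the root are shown, \`a la Groote, to be instances of axioms that are already sound for bisimulation, hence unable to create the choice. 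The paper also isolates the reason the choice cannot be assembled gradually: the headless versions satisfy $p_n^-\not\equiv_{CC}q_n^-$, so the join must happen in one step at the top. You would need analogues of all three ingredients (a rigidity lemma for the chain $c_k$ up to $\equiv_{CC}$, a root-level argument, and a ``no gradual absorption'' lemma ruling out derivations that slide $bc_j$ along the chain towards $bc_i$ or $bc_k$) before your version closes; until then the theorem is not established by your argument.
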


\begin{proof}
Let us take $a_{bi}\in A^\mathit{bi}$ and, without loss of generality, $a_r\in A^r$. We consider the two families of processes
\[p_n=a_r a_{bi}^n a_r 0\quad\textrm{and}\quad q_n=a_r a_{bi}^n a_r 0+a_r a_{bi}^n 0,
\]
where, as usual, we denote by $a^n_{bi}$ (with $n\geq 0$) the repeated application of the prefix operator $a_{bi}$ ($n$ times).

It is easy to check that $p_n\equiv_{CC} q_n$.
On the one hand, $\CoCo{p_n}{q_n}$ trivially; on the other hand, checking that $\CoCo{q_n}{p_n}$ simply amounts to checking that $\CoCo{0}{a_r}$. (However, note that taking $p^{-}_n=a_{bi}^n a_r 0$ and $q^{-}_n=a_{bi}^n a_r
0+a_{bi}^n 0$ does not lead to $p^{-}_n\equiv_{CC}q^{-}_n$; indeed, $p^{-}_n\not\s_{CC}q^{-}_n$ because if we start with the first $a_{bi}$ from the second summand of $q^{-}_n$ then $a_{bi}^{n-1} a_r 0\not\s_{CC}a_{bi}^{n-1} 0$.)
Now, for any finite axiomatization $\mathcal{A}$, let $n$ be bigger than the depth of any term appearing in $\mathcal{A}$;
we are going to show that if
$\mathcal{A}$ is sound for $\equiv_{CC}$ then we cannot have ${\mathcal{A}}\vdash{p_n=q_n}$.

We will show that if we start with $p_n$ and obtain a sequence of equivalent terms $p_n = p^{1}_n = p^{2}_n = \dots$, where each term
is obtained from the previous one by an application of a single axiom in $\mathcal{A}$, then no $p^{j}_n$ can be $q_n$.
If we apply an axiom to $p_n$ in a position different from its root, then we are transforming a subprocess $p'=a^m_{bi} a_r 0$,
with $m\leq n$, into some equivalent process $q\equiv_{CC}p'$. If we define $q\da m$ as the process obtained by ``pruning'' $q$ at
depth $m$, the result will be bisimilar to $a^m_{bi} 0$, since $q$ cannot execute any other action until it executes the prefix
$a_{bi}$ $m$ times and, moreover, it cannot stop in the meantime. In a similar way, from $q\equiv_{CC}p'$ we also infer that $q\da(m+1)\sim p'\da(m+1)$ and then the obtained $p^1_n$ satisfies  $p^1_n\da (n+2)\sim p_n$. The same argument can be applied starting from any $p^j_n$ such that $p^j_n\da (n+2)\sim p_n$, so that this invariant is preserved as long as there is no application of an axiom in $\mathcal{A}$ at the root of any $p^j_n$.

Therefore, the only possible way to break this invariant, that obviously is not satisfied by $q_n$, is to apply an axiom from $\mathcal{A}$ at the root of some $p^{j}_n$. In that case, the lefthand side of such an axiom would match several prefixes of the process $a_r a_{bi}^m0$ and then, following \cite{Groote90}, it is easy to see that the corresponding axiom has to be correct under bisimulation, too. As a consequence, the process $p^j_{n+1}$ resulting after the application of the axiom also satisfies $p^j_{n+1}\da(n+2)\sim p_n$. Therefore by repeated application of the axioms in $\mathcal{A}$ we will never reach a term such as $q_n$, thus concluding $\mathcal{A}\nvdash p_n=q_n$.\qed
\end{proof}

Note that the proof would remain valid even if we allowed conditional axioms whose conditions only observed the process locally, since the key fact
in the proof above is that in order to generate the choice at $q_n$ we need to ``see from the top'' that the two branches below, even if different from each other, can be joined to obtain a process equivalent to $p_n$.
But the branches cannot be joined bottom up, in a step by step fashion, since
 $p^{-}_n\not\equiv_{CC}q^{-}_n$.
Therefore, a conditional axiomatization whose conditions observe the processes locally would suffer the same problems as a purely equational one.

\subsection{Conformance simulation}

As before, we start by applying to the axioms characterizing $\lesssim^p_{CS}$ the general procedure presented in
\cite{Frutos-EscrigGP09,AcetoEtAl07,Frutos-EscrigGP09b}. In this case we obtain the following two axioms:

\begin{itemize}
\item[ ] $(\mathsf{S}^{CS}_\equiv)\quad {I(p)\cap I(q)=\emptyset}\Longrightarrow {ap=ap+a(p+q)}$.

\item[ ] $(\mathsf{S}^{-1,CS}_\equiv)\quad {I(q)\subseteq I(p)}\Longrightarrow {a(p+q)=a(p+q)+ap}$.
\end{itemize}

Note that we have used the contravariant version of the procedure because once we compare two processes offering the same set of actions the behavior of $\lesssim^p_{CS}$ is contravariant since we have
\[ap\gtrsim^p_{CS} ap+aq.\]

\noindent Therefore, we cannot apply the general results in \cite{Frutos-EscrigGP09,Frutos-EscrigGP09b} to prove the completeness of the proposed axiomatization. However, a beautiful variant of the classical proof for plain simulation will do the job.

\begin{theorem}
The set of axioms $\mathcal{A}_{CS}^\equiv=\{\mathsf{B}_1,\mathsf{B}_2,\mathsf{B}_3,\mathsf{B}_4,\mathsf{S}^{CS}_\equiv,\mathsf{S}^{-1,CS}_\equiv\}$ is
a complete axiomatization for the simulation equivalence $\equiv_{CS}$.
\end{theorem}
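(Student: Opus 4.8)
The plan is to adapt the classical completeness argument for plain-simulation equivalence — from $p\lesssim_S q$ one derives $\mathcal{A}_S\vdash q=p+q$, and hence $p\equiv_S q$ implies $\vdash p=q$ — but with every rewriting step forced to take place \emph{under a prefix}, since conformance simulation is not a precongruence and the two new axioms are guarded. Two preliminary observations help. First, because $\Conformance{p}{q}$ entails $I(p)\subseteq I(q)$, the relation $p\equiv_{CS}q$ forces $I(p)=I(q)$, so $\equiv_{CS}$ coincides with the kernel of $\PreConformance{}{}$ and is in particular a congruence, which makes equational reasoning sound for it. Soundness of the two new axioms is then routine: the guarding prefix renders the side conditions harmless. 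For $(\mathsf{S}^{CS}_\equiv)$, say, one checks $\Conformance{ap+a(p+q)}{ap}$ (immediate: $ap$ mimics the only $a$-move by going to $p$) and $\Conformance{ap}{ap+a(p+q)}$, which reduces to $\Conformance{p}{p+q}$; the additional moves of $p+q$ are on actions of $q$, forbidden in $p$ by $I(p)\cap I(q)=\emptyset$, hence vacuously matched. The axiom $(\mathsf{S}^{-1,CS}_\equiv)$ is dual (growth versus determinisation).

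The core is an absorption lemma, the conformance counterpart of the classical fact quoted above; note that here the \emph{better} process is the one being absorbed, matching the fact that $ap\equiv_{CS}ap+aq$ holds precisely when $\Conformance{p}{q}$. \textbf{Lemma.} \emph{If $\Conformance{p}{q}$ then $\mathcal{A}_{CS}^\equiv\vdash ap=ap+aq$ for every $a\in A$.} I would prove this by induction on the depth of $q$. Decompose $p$ and $q$ into their summands and split $q=q^{<}+q^{>}$, where $q^{<}$ gathers the summands of $q$ whose leading action lies in $I(p)$ and $q^{>}$ the rest, so $I(p)\cap I(q^{>})=\emptyset$. For each summand $b\,q'$ of $q^{<}$, the definition of $\Conformance{p}{q}$ provides a summand $b\,p'$ of $p$ with $\Conformance{p'}{q'}$; as the depth of $q'$ is strictly smaller, the induction hypothesis gives $\mathcal{A}_{CS}^\equiv\vdash b\,p'=b\,p'+b\,q'$, whence $\vdash p=p+b\,q'$. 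Adding all such summands yields $\vdash p=p+q^{<}$, and therefore $\vdash p+q=p+q^{>}$. Now $(\mathsf{S}^{CS}_\equiv)$ applies, since $I(p)\cap I(q^{>})=\emptyset$, giving $\vdash ap=ap+a(p+q^{>})=ap+a(p+q)$; and $(\mathsf{S}^{-1,CS}_\equiv)$ applied to $a(q+p)$, using $I(p)\subseteq I(q)$, gives $\vdash a(p+q)=a(p+q)+aq$. Combining the two equalities produces $\vdash ap=ap+aq$. The base case (depth of $q$ equal to $0$) is trivial: then $q$, and hence $p$, is provably $0$, and $\mathsf{B}_3,\mathsf{B}_4$ finish it.

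Granting the lemma, the theorem needs no further induction. Let $p\equiv_{CS}q$, so $\Conformance{p}{q}$, $\Conformance{q}{p}$ and $I(p)=I(q)$. For each summand $a\,p'$ of $p$, $\Conformance{q}{p}$ (using $a\in I(q)$) supplies a summand $a\,q'$ of $q$ with $\Conformance{q'}{p'}$, and the lemma then gives $\mathcal{A}_{CS}^\equiv\vdash a\,q'=a\,q'+a\,p'$, hence $\vdash q=q+a\,p'$. Ranging over all summands of $p$ gives $\vdash q=q+p$, and the symmetric argument (from $\Conformance{p}{q}$ and $I(q)=I(p)$) gives $\vdash p=p+q$; therefore $\mathcal{A}_{CS}^\equiv\vdash p=p+q=q+p=q$.

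The main obstacle is the case $I(p)\subsetneq I(q)$ of the lemma. There, matching summands only clears $q^{<}$; the leftover $q^{>}$ cannot be absorbed into $ap$ directly, because $\Conformance{p}{q^{>}}$ fails whenever $p\neq 0$ (it would require $I(p)\subseteq I(q^{>})$). The escape is the two-step manoeuvre between the guarded axioms: first \emph{enlarge} $ap$ to $a(p+q^{>})$ with the growth axiom $(\mathsf{S}^{CS}_\equiv)$, then use the summand-matching to rewrite $p+q^{>}$ as $p+q$ and \emph{contract} the result with the determinisation axiom $(\mathsf{S}^{-1,CS}_\equiv)$. This interplay — impossible for an unguarded axiom, which is why $\equiv_{CS}$ has no analogue of $(\mathsf{S})$ — is the ``beautiful variant'' of the classical proof. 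One should finally be careful that the induction runs on the depth of $q$ alone, the partner $p$ never entering the measure, so that each descent to a continuation of $q$ strictly decreases it.
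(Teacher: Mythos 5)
Your proof is correct and follows essentially the same route as the paper's: the same key absorption statement ($\Conformance{p}{q}$ yields $\mathcal{A}_{CS}^\equiv\vdash ap=ap+aq$, proved by induction on depth), the same split of the right-hand process into the summands whose initial actions lie in $I(p)$ and the rest, and the same grow-with-$(\mathsf{S}^{CS}_\equiv)$-then-contract-with-$(\mathsf{S}^{-1,CS}_\equiv)$ manoeuvre. The only cosmetic difference is that you carry the guarding prefix in the induction statement and perform the split at the top of $q$, inducting on the depth of $q$, whereas the paper inducts on the depth of $p$ and performs the corresponding split one level down, on each $a$-derivative $q'$.
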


\begin{proof}
First note that $p\equiv_{CS}q$ implies $I(p)=I(q)$ and $p\equiv_{CS}^p q$, and therefore we can use either $\equiv_{CS}$ or $\equiv_{CS}^p$,
indistinctly. It is also routine to check the correctness of the axioms for $\equiv_{CS}$. To prove completeness, we show that
$\PreConformance{p}{q}$ implies $\mathcal{A}_{CS}^\equiv\vdash p=p+q$. Obviously, then we are done because  ${p\equiv_{CS}q}$ implies
$\PreConformance{p}{q}$ and $\PreConformance{q}{p}$.

We proceed by induction on the depth of $p$:

\begin{itemize}
\item $p=0$ implies $q=0$ trivially.

\item Let $\PreConformance{p}{q}$ with $p\stackrel{a}{\lra}$. Then we also have $q\stackrel{a}{\lra}$ and for all $q'$ with $q\stackrel{a}{\lra}q'$ there exists
$p\stackrel{a}{\lra}p'$ such that $\Conformance{p'}{q'}$. Note that we cannot conclude $\PreConformance{p'}{q'}$ since it is possible that
$I(p')\varsubsetneq I(q')$, but then we can write $q'=q''+r$ with $I(q'')=I(p')$ and $I(r)\cap I(q'')=\emptyset$. It is clear that $\PreConformance{p'}{q''}$, so that by induction hypothesis we obtain $\mathcal{A}^\equiv_{CS}\vdash p'=p'+q''$. Then, we have $\mathcal{A}_{CS}^\equiv\vdash ap'=a(p'+q'')$ and applying $(\mathsf{S}^{-1,CS}_\equiv)$, $\mathcal{A}_{CS}^\equiv\vdash ap'=a(p'+q'')+aq''$, and then $\mathcal{A}_{CS}^\equiv\vdash ap'=ap'+aq''$. Now, by applying $(\mathsf{S}^{,CS}_\equiv)$ we have $\mathcal{A}_{CS}^\equiv\vdash aq''=aq''+a(q''+r)$, to conclude that
$\mathcal{A}_{CS}^\equiv\vdash ap'=ap'+aq'$ and therefore  $\mathcal{A}_{CS}^\equiv\vdash p=p+q$. \qed
\end{itemize}
\end{proof}

Note that $(\mathsf{S}^{-1,CS}_\equiv)$ is the axiom characterizing the ready simulation equivalence, from which we conclude that
${\equiv_{RS}}\subseteq{\equiv_{CS}}$. Obviously, the reverse inclusion is false since $(\mathsf{S}^{CS}_\equiv)$ is not sound for
$\equiv_{RS}$. For instance, $ab=_{CS}ab+a(b+c)$, but $a(b+c)\not\lesssim_{RS} ab$. In fact, we also have $a(b+c)\not\lesssim_S ab$, proving that
${\equiv_{CS}}\nsubseteq{\equiv_{S}}$. In order to obtain $\equiv_{RS}$ from $\equiv_{CS}$ we should strengthen the definition of the latter by
considering ready conformance simulations defined as plain conformance simulations, but only allowing pairs of processes satisfying $I(p)=I(q)$.
If we denote by $\lesssim_{RCS}$ the generated preorder we have the following result.

\begin{proposition}
${\lesssim_{RCS}}={\lesssim_{RS}^{-1}}$, and therefore ${\equiv_{RCS}}={\equiv_{RS}}$ and ${\lesssim^{-1}_{RS}}\subseteq {\lesssim_{CS}}$.
\end{proposition}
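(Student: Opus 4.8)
The plan is to prove the first equality $\lesssim_{RCS}=\lesssim_{RS}^{-1}$ directly at the level of witnessing relations, and then derive the other two statements formally. Concretely, I would show that a relation $R\subseteq P\times Q$ is a ready conformance simulation with $pRq$ if and only if $R^{-1}\subseteq Q\times P$ is a ready simulation with $q\,R^{-1}\,p$.

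First I would unfold the two definitions. A ready conformance simulation $R$ satisfies, for every $p'Rq'$: $I(p')=I(q')$ and, for every $a$ with $q'\stackrel{a}{\lra}q''$, there is $p''$ with $p'\stackrel{a}{\lra}p''$ and $p''Rq''$. Here one should observe that the second clause of a \emph{plain} conformance simulation (which only demands a match for those $a$ with \emph{both} $q'\stackrel{a}{\lra}$ and $p'\stackrel{a}{\lra}$) collapses to the unguarded form just displayed precisely because the ready condition $I(p')=I(q')$ forces $q'\stackrel{a}{\lra}$ to entail $p'\stackrel{a}{\lra}$. On the other side, a ready simulation $S$ satisfies, for every $q'Sp'$: $I(q')=I(p')$ and, for every $a$ with $q'\stackrel{a}{\lra}q''$, there is $p''$ with $p'\stackrel{a}{\lra}p''$ and $q''Sp''$. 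Putting $S=R^{-1}$, the pair ``$q'Sp'$'' is literally the pair ``$p'Rq'$'', the two readiness requirements $I(p')=I(q')$ and $I(q')=I(p')$ are the same, and the two transfer requirements coincide since $p''Rq''\iff q''\,R^{-1}\,p''$. Hence $R$ witnesses $p\lesssim_{RCS}q$ iff $R^{-1}$ witnesses $q\lesssim_{RS}p$, so $p\lesssim_{RCS}q\iff q\lesssim_{RS}p$, i.e.\ $\lesssim_{RCS}=\lesssim_{RS}^{-1}$.

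The remaining two claims are then immediate. For the equivalences: $p\equiv_{RCS}q$ means $p\lesssim_{RCS}q$ and $q\lesssim_{RCS}p$, which by the first part is $q\lesssim_{RS}p$ and $p\lesssim_{RS}q$, i.e.\ $p\equiv_{RS}q$; thus $\equiv_{RCS}=\equiv_{RS}$. For the inclusion $\lesssim_{RS}^{-1}\subseteq\lesssim_{CS}$, I would note that every ready conformance simulation is in particular a conformance simulation: the ready condition $I(p')=I(q')$ gives the first clause $I(p')\subseteq I(q')$, and, as already remarked, under that condition the unguarded transfer clause is equivalent to the guarded one of conformance simulation. Hence $\lesssim_{RCS}\subseteq\lesssim_{CS}$, and combining with $\lesssim_{RCS}=\lesssim_{RS}^{-1}$ yields $\lesssim_{RS}^{-1}\subseteq\lesssim_{CS}$.

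This is essentially a bookkeeping exercise in unfolding and matching definitions, so there is no deep obstacle; the one point that genuinely needs care — and which I would state explicitly — is that the ready condition $I(p)=I(q)$ is exactly what reconciles the conformance-style transfer clause (guarded by ``$p\stackrel{a}{\lra}$'') with the simulation-style transfer clause (unguarded). Once that observation is pinned down, the chain of equivalences $R\leftrightarrow R^{-1}$ goes through with no further subtlety.
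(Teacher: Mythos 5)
Your argument is correct. The paper in fact states this proposition without any proof, so there is nothing to compare against; your direct unfolding of the definitions---observing that the readiness condition $I(p)=I(q)$ makes the guard ``$p\stackrel{a}{\lra}$'' in the conformance transfer clause vacuous, so that $R$ is a ready conformance simulation exactly when $R^{-1}$ is a ready simulation---is the natural and evidently intended argument, and the two corollaries follow formally as you say.
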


Since $(\mathsf{S}^{-1,CS}_\equiv)$ is the axiom that defines ready simulation equivalence, it can be presented in an
equivalent way avoiding the condition and thus obtaining a pure algebraic axiom. However, it is not clear whether axiom
$(\mathsf{S}^{CS}_\equiv)$ allows such a finite pure algebraic presentation, and in fact the same happens with the axiom $(\mathsf{S}_{CS})$ in
the axiomatization of the conformance preorder. Hence, it could be the case that both the conformance preorder and the induced equivalence are
not finitely axiomatizable using pure equational axioms, as is the case for ready trace semantics.

\section{Conclusions}

We have continued with the study of covariant-contravariant simulation and conformance simulation semantics started in \cite{FabregasFP09,FabregasFP10} by considering the axiomatization of the preorders and equivalences that they define.

We have showed that the desired axiomatizations can be obtained from that of the plain simulation preorder, whose
completeness proof can be adapted in a simple, but elegant manner to obtain the completeness of the new axiomatizations. Also, by applying a suitable variation of our ``ready to preorder'' techniques
\cite{Frutos-EscrigGP09} we have obtained the axiomatizations of the corresponding conformance simulation equivalence. Surprisingly,  we also succeeded in axiomatizating the equivalence for covariant-contravariant simulations but only in the particular case where $A^\mathit{bi}=\emptyset$; otherwise, we proved that the covariant-contravariant simulation equivalence has turned out to be the second known example of a semantics whose defining preorder can be finitely axiomatized, but the induced equivalence cannot.
The first example of such a borderline situation can be found in \cite{ChenF08}.
It is curious to notice that although the two semantics are completely different
(the semantics here is quite simple since it is a plain semantics,
while the one in \cite{ChenF08} is much more complicated), and in our case it
is clear that the difficulties stem from the interference between bivariant
and monovariant actions, the structure of the considered ``counterexamples''
in both cases is essentially the same: there is a choice betweeen two quite
long branches which can be can joined into a single one,
but this should be done in a single step because the choice cannot be delayed
at all, even if the beginnings of the two branches are the same.
Therefore, in order to capture the equivalence, we would need an axiom
able to ``see'' the (too far away) ends of the two branches, but this is of
course impossible with a finite number of axioms
since the lengths of the branches in the counterexamples can be arbitrarily
long.

We expect our work on the subject to contribute to a better understanding of all the complex situations that arise when covariant and contravariant concepts coexist. This, for example, is the case in all the recent works on modal, input-output or interface formalisms, that try to clarify the relationships betwen specifications and implementations. In fact, it is our intention to continue with this line of research by trying to discover, and take benefit from all the connections between our work and those cited in this paper.


\end{document}